\newcommand{\ac}[1]{\textcolor{gray}{#1}}
\newcommand{\dc}[1]{\textcolor{red}{#1}}
\newcommand{\ec}[1]{\textcolor{blue}{#1}}
\newcommand{\mc}[1]{\textcolor{blue}{#1}}
\newcommand{\uc}[1]{\textcolor{red}{#1}}
\newcommand{\ic}[1]{\textcolor{green}{#1}}
\newcommand{\keywords}[1]{\par\addvspace\baselineskip
\noindent\keywordname\enspace\ignorespaces#1}
\begin{document}

\mainmatter  

\title{Alignment of protein-coding sequences with frameshift extension penalties}

\titlerunning{Alignment of coding sequences}

\author{Fran\c{c}ois B\'elanger
\and A\"ida Ouangraoua}

  
%
\authorrunning{B\'elanger-Ouangraoua}
\institute{Institute\\
\mailsa\\
}

%
%

\toctitle{Comparison of protein-coding DNA sequences}
\tocauthor{Bélanger-Ouangraoua}
\maketitle

\begin{abstract}
  We introduce an algorithm for the alignment of protein-coding sequences
  accounting for frameshifts. The main specificity of this algorithm as
  compared to previously published protein-coding sequence alignment
  methods is the introduction of a penalty cost for frameshift extensions.
  Previous algorithms have only used constant frameshift penalties.
  This is similar to the use of scoring schemes with affine gap penalties
  in classical sequence alignment algorithms. However, the overall penalty
  of a frameshift portion in an alignment cannot be formulated as an affine
  function, because it should also incorporate varying codon substitution
  scores.
  The second specificity of the algorithm is its search space being the set
  of all possible alignments between two coding sequences, under the
  classical definition of an alignment between two DNA sequences.
  Previous algorithms have introduced constraints on the length of the
  alignments, and additional symbols for the representation of frameshift
  openings in an alignment. The algorithm has the same asymptotic space and
  time complexity as the classical Needleman-Wunsch algorithm.
  

  \keywords{Protein-coding sequences, Pairwise alignment, Frameshifts, Dynamic programming}
\end{abstract}

\section{Introduction and motivation}
\label{intro}

Comparative genomics is currently facing a huge challenge with the revelation
of a growing number of genes having multiple alternative coding sequences
in several species \cite{cunningham2015ensembl,pruitt2009consensus}. The various coding sequences arising from a same gene
or homologous genes differ not only by mutations in the nucleotide sequences,
but also by alternative start codons and alternative splicing of exons. All
these mechanisms often induce \emph{translation frameshifts} that lead to
different translations of a same portion of gene in distinct coding sequences
\cite{okamura2006frequent}.
This new enlightment on the complexity of gene architecture  evolution calls
for novel algorithms
for the comparison of coding sequences capable to account for the presence of
translation frameshifts between coding sequences.

The problem of aligning two coding sequences is an optimization problem that
consists in finding an optimal score alignment in a set of alignments
between the two sequences.
A \emph{coding sequence} is a DNA sequence composed of a succession of words of length $3$ called \emph{codons}.
An \emph{alignment} between two DNA sequences $A$ and $B$
is a pair of sequences $A'$ and $B'$ of same length $L$ on the alphabet of nucleotides
augmented with the gap symbol '-', such that $A'$ and $B'$ do not
contain a gap symbol '-' at a same
position, 
and $A$ and $B$ can be derived from  $A'$ and $B'$ by removing all the gap
symbols. The length $L$ of $A'$ and $B'$ is called the length of the
alignment.
A \emph{translation frameshift} in an alignment between two coding sequences is caused
by i) the deletion of one or two nucleotides of a codon (for example, a codon $\texttt{ACC}$ aligned with $\texttt{A--}$),
or ii) the insertion of nucleotides between two nucleotides of a codon
(for example, a codon $\texttt{A--CC}$ aligned with $\texttt{AGACC}$). 
The computation of an optimal alignment between two
coding sequences should account for both the translation of the coding
sequences into protein sequences, and the presence of translation
frameshifts between the two coding sequences.

A classical approach for comparing two coding sequences consists in a
three-step method, where coding sequences are first translated into protein
sequences, next protein sequences are aligned, and finally the protein
alignment is back-translated to a coding sequence alignment. This approach
is used in most tools for multiple alignment of coding sequences
\cite{abascal2010,wernersson2003,bininda2005,morgenstern2004}.
However, it is not able to account for the presence of frameshifts between
coding sequences.

The problem of aligning two coding sequences of length $n$ and $m$ while
accounting for both the corresponding protein sequences and the presence
of frameshifts was first addressed by Hein et al. \cite{hein1994,pedersen1998}.
They proposed a DNA/protein model such that the score of an alignment between
two coding sequences is a combination of its
score at the DNA level and its score at the protein level. Under this model,
a $O(n^2.m^2)$ algorithm \cite{hein1994} and then a $O(n.m)$
algorithm \cite{pedersen1998} were proposed
to compute an optimal score alignment. 
The search space of the algorithms are the set of alignments that can be each
uniquely decomposed into a succession of sub-alignments of eleven (11) types.
The eleven types of sub-alignment are defined such that the length of each
of them is a multiple of $3$. Thus, the total length of any alignment in the
search space is always a multiple of $3$, and the score of an alignment is  the sum of the scores of its sub-alignments.

Arvestad \cite{arvestad1997} proposed another $O(n.m)$ protein-coding
alignment algorithm
based on the concept of generalized substitutions introduced in
\cite{sankoff1983}. In this algorithm,  an alignment between two coding
sequences $A$ and $B$ is a pair of sequences on the alphabet of nucleotides
 augmented with the gap symbol '-' and the frameshift symbol '!'.
The search space of the algorithm is the set of alignments that are each
composed of a succession of sub-alignments of length $3$ such that each
sub-alignment is an alignment between two codon fragments of $A$ and $B$.
A \emph{codon fragment} of a coding sequence $S$ is defined as a
word of length $0$ to $5$ in $S$. If a codon fragment has a length of $4$
(resp. $5$), then one or two nucleotides in the codon fragment are
dropped in order to fit in a sub-alignment of length $3$. Such dropped
nucleotides are simply ignored in the definition of the score of a length-3
sub-alignment. If a codon fragment has a length of $1$ or $2$,
then two or one frameshift opening symbols '!' are added in the codon
in order to fit in a sub-alignment of length $3$.
The score of an alignment is then defined as the sum of the scores of its
length-3 sub-alignments.

More recently, Ranwez et al. \cite{ranwez2011} proposed  a simplification of
the model of Arvestad \cite{arvestad1997} where  a \emph{codon fragment}
of a coding sequence $S$ is defined as a word of length $0$ to $3$ in $S$.
Thus, no supplemental combinatorics are required in order to consider all the
possibilities of dropping one or two nucleotides from a codon fragment of
length $4$ or $5$. 
The algorithm has a complexity in $O(n.m)$. This method was extended in the
context of multiple protein-coding sequence alignment \cite{ranwez2011}.

The above three methods \cite{arvestad1997,pedersen1998,ranwez2011}
compare two coding sequences  while accounting
for the presence of translation frameshift openings between the two 
sequences. A frameshift in an alignment is penalized by adding
a constant frameshift cost, which only penalizes the initiation
of a frameshift, not accounting for the extension of this frameshift in
the alignment.

For example, we consider the following three coding sequences : Seq1, Seq2,
and Seq3. Seq1 has a length of $45$. Seq2 (resp. Seq3) has a length of $60$
and is obtained from Seq1 by deleting the nucleotide 'C' at position $30$
(nucleotide 'G' at position $15$) and adding $16$ nucleotides at the end.

\noindent
$\texttt{Seq1: \ac{ATG}ACC\ac{GAA}TCC\ac{AAG}CAG\ac{CCC}TGG\ac{CAT}AAG\ac{TGG}GGG\ac{AAC}GAT\ac{TGA}}$\\
$\texttt{~~~~~~~M~~T~~E~~S~~K~~Q~~P~~W~~H~~K~~W~~G~~N~~D~~*}$\\ 
$\texttt{Seq2: \ac{ATG}ACC\ac{GAA}TCC\ac{AAG}CAG\ac{CCC}TGG\ac{CAT}AAT\ac{GGG}GGA\ac{ACG}ATT\ac{GAA}GTA\ac{GGA}ACG\ac{ATT}TAA}$\\
$\texttt{~~~~~~~M~~T~~E~~S~~K~~Q~~P~~W~~H~~N~~G~~G~~T~~I~~E~~V~~G~~T~~I~~*}$\\ 
$\texttt{Seq3: \ac{ATG}ACC\ac{GAA}TCC\ac{AAC}AGC\ac{CCT}GGC\ac{ATA}AGT\ac{GGG}GGA\ac{ACG}ATT\ac{GAA}GTA\ac{GGA}ACG\ac{ATT}TAA}$\\
$\texttt{~~~~~~~M~~T~~E~~S~~N~~S~~P~~G~~I~~S~~G~~G~~T~~I~~E~~V~~G~~T~~I~~*}$

When looking at the translations of Seq1 and Seq2, it is easily observable
that Seq2 is more similar to Seq1, than Seq3 is similar to Seq1.
However, the pairwise alignment algorithms
accounting for frameshifts \cite{arvestad1997,pedersen1998,ranwez2011}
would return the same score for the two following optimal alignments
of Seq1 and Seq2, and Seq1 and Seq3, penalizing only the initiation
of a frameshift in both cases (positions colored in red in the alignments).

\noindent
$\texttt{\bf Optimal alignment between Seq1 and Seq2:}$\\
$\texttt{~M~~T~~E~~S~~K~~Q~~P~~W~~H~~\dc{K}~~\ec{W~~G~~N~~D~~*}~~-~~-~~-~~-~~-~~-}$\\
$\texttt{\ac{ATG}ACC\ac{GAA}TCC\ac{AAG}CAG\ac{CCC}TGG\ac{CAT}AAG\ac{TGG}GGG\ac{AAC}GAT\ac{TGA}------------------}$\\
$\texttt{\ac{ATG}ACC\ac{GAA}TCC\ac{AAG}CAG\ac{CCC}TGG\ac{CAT}AA-T\ac{GGG}GGA\ac{ACG}ATT\ac{GAA}GTA\ac{GGA}ACG\ac{ATT}TAA--}$\\
$\texttt{~M~~T~~E~~S~~K~~Q~~P~~W~~H~~\dc{!}~~\ec{W~~G~~N~~D~~*}~~S~~R~~N~~D~~L~~!~~}$\\
\noindent
$\texttt{\bf Optimal alignment between Seq1 and Seq3:}$\\
$\texttt{~M~~T~~E~~S~~\dc{K}~~\ec{Q~~P~~W~~H~~K~~W~~G~~N~~D~~*}~~-~~-~~-~~-~~-~~-}$\\
$\texttt{\ac{ATG}ACC\ac{GAA}TCC\ac{AAG}CAG\ac{CCC}TGG\ac{CAT}AAG\ac{TGG}GGG\ac{AAC}GAT\ac{TGA}------------------}$\\
$\texttt{\ac{ATG}ACC\ac{GAA}TCC\ac{AA-C}AGC\ac{CCT}GGC\ac{ATA}AGT\ac{GGG}GGA\ac{ACG}ATT\ac{GAA}GTA\ac{GGA}ACG\ac{ATT}TAA--}$\\
$\texttt{~M~~T~~E~~S~~\dc{!}~~\ec{Q~~P~~W~~H~~K~~W~~G~~N~~D~~*}~~S~~R~~N~~D~~L~~!}$

We describe a pairwise alignment algorithm that uses a scoring scheme
penalizing both the initiation and the extensions of frameshifts (positions
colored in blue  in the alignments).
In Section \ref{preliminaries}, some preliminary definitions of alignments and
the description of the problem are presented. In Section \ref{algorithm}, the
new algorithm for computing an optimal score alignment is described.

\newpage
\section{Preliminaries : Alignment of protein-coding sequences}
\label{preliminaries}

In this section, we formally describe coding sequences and the pairwise
alignment problem that is solved in Section \ref{algorithm}.

\begin{definition}[Coding sequence]
  A coding sequence is DNA sequence on the alphabet of nucleotides
  $\Sigma_N=\{a,c,g,t\}$ whose length $n$ is a multiple of $3$. A coding
  sequence is composed of a succession of $\frac{n}{3}$ codons that are
  the words of length $3$ in the sequence ending at positions $3i$,
  $1 \leq i \leq \frac{n}{3}$. The translation of the coding
  sequence is a protein sequence of length $\frac{n}{3}$ on the alphabet
  $\Sigma_A$ of amino acids (aa) such that each codon of the coding
  sequence is translated into an amino acid in the  protein sequence.
\end{definition}

In this work, the definition of an alignment between two coding sequences is
exactly the same as the classical definition of an alignment between two
DNA sequences used by the Needleman-Wunsch algorithm for the comparison of
two sequences \cite{needle}.

\begin{definition}[alignment between DNA sequences]
An alignment between two DNA sequences $A$ and $B$ is a pair
$(A',B')$ where $A'$ and $B'$ are two sequences of same length $L$
derived by inserting gap symbols $'-'$ in $A$ and $B$, such that
$\forall i, ~1 \leq i \leq L,  ~ A'[i] \neq '-'$ or $B'[i] \neq '-$.
Each position $i, ~1 \leq i \leq L$, in the alignment is called a
column of the alignment.
\end{definition}

Given a sequence $S$ of length $L$ on the alphabet  $\Sigma=\{a,c,g,t,-\}$,
$S[k~..~l], ~1 \leq k \leq l \leq L$, denotes the subsequence of $S$
going from position $k$ to position $l$. $|S[k~..~l]|$ denotes the number
of letters in $S[k~..~l]$ that are different from the gap symbol $'-'$.
For example, $|\texttt{AC--G}| = 3$.

Given an alignment $(A',B')$ between two coding sequences $A$ and $B$,
a codon of $A$ or $B$ is \emph{grouped in the alignment} if its
three nucleotides appear in three consecutive columns of the alignment.
For example, a codon $\texttt{ACC}$ that appears in the alignment as
$\texttt{ACC}$ is grouped, while it is not grouped if it appears as
$\texttt{A-CC}$.


In the following, we give our definition of the score of an alignment between
two coding sequences $A$ and $B$. It is based on a partition of the codons
of $A$ and $B$ into four sets (types):

The set of $\texttt{Matching codons (M)}$ contains the codons
that are grouped in the alignment, and aligned exactly with a
codon of the other sequence.

The set of $\texttt{Unmatching codons (U)}$ contains the codons that are
grouped in the alignment, and aligned with three consecutive nucleotides
of the other sequence that do not form a codon.

The set of $\texttt{Deleted/Inserted codons (InDel)}$ contains the codons
that are grouped in the alignment, and aligned with a succession of $3$ gaps.

All other codons are \emph{frameshift codons}. Following the definitions
and notations for frameshifts used in \cite{ranwez2011}, the set of frameshift
codons can be divided into two sets. The set of $\texttt{frameshift
  codons caused by deletions (FS$^-$)}$ contains the codons that are grouped
in the alignment, and are aligned with only one or two nucleotides in the
other sequence and some gap symbols.
The set of $\texttt{frameshift codons caused}$
  $\texttt{by insertions (FS$^+$)}$ contains
all the codons that are not grouped in the alignment.

The set of $\texttt{Matching nucleotides in frameshift codons (MFS)}$ contains
all the nucleotides belonging to a frameshift codon, and aligned with a
nucleotide of the other sequence.

The substitutions of matching (M)  and unmatching (U) codons  are scored
using an amino acid scoring function $s_{aa}$, and a fixed frameshift
extension cost denoted by  $\texttt{fs\_extension\_cost}$ is added for
each unmatching codon (U). The insertions/deletions of codons (Indel)
are scored by adding a
fixed gap cost denoted by $\texttt{gap\_cost}$ for each inserted/deleted
codon (Indel).
The alignment of frameshift codon nucleotides (MFS) are scored independently
from each other, using a nucleotide scoring function $s_{an}$. The insertions
or deletions of nucleotides from frameshift codons are responsible for
the initiation of frameshifts. They are then scored by adding a fixed
frameshift opening cost denoted by $\texttt{fs\_open\_cost}$ for each
frameshift codon.

In the following definition of the score of an alignment, the matching (M),
unmatching (U), and deleted/inserted
(InDel) codons of $A$ and $B$ are simply identified by the position (column)
of their
last nucleotide in the alignment. The matching nucleotides in frameshift codons
(MFS) are also identified by their positions in the alignment.

\begin{definition}[Score of an alignment]
  Let $(A',B')$ be an alignment of length $L$ between two coding sequences $A$ and $B$.
  
  \scriptsize
   \[ \begin{array}{lll}
  M_{A\rightarrow B} & = & \{ k, k \leq L ~|~ \exists ~ (i,j)  ~ s.t. ~ A'[k-2~..~k]=A[3i-2~..~3i] ~ and ~ B'[k-2~..~k]=B[3j-2~..~3j]\}\\
  &&\\
  U_{A\rightarrow B} &  = & \{ k, k \leq L ~|~  k \notin M_{A\rightarrow B} ~ and ~ \exists ~ i  ~ s.t. ~ A'[k-2~..~k]=A[3i-2~..~3i]  ~ and ~ |B'[k-2~..~k]| = 3\}\\
  &&\\
  Indel_{A\rightarrow B} & = & \{ k, k \leq L ~|~  \exists ~ i  ~ s.t. ~ A'[k-2~..~k]=A[3i-2~..~3i]~ and ~ |B'[k-2~..~k]| = 0\}\\
  &&\\
%
%
  MFS_{A\rightarrow B} & = & \{ k, k \leq L ~|~  \{k,k+1,k+2\} \cap  (M_{A\rightarrow B}\cup U_{A\rightarrow B}\cup InDel_{A\rightarrow B}) = \emptyset ~ and \\
  & & \exists ~ (i,j)  ~ s.t. ~ A'[k]=A[i] ~ and ~ B'[k]=B[j]\}\\
&& \\
  M_{B\rightarrow A} & = & \{ k, k \leq L ~|~ \exists ~ (j,i)  ~ s.t. ~ B'[k-2~..~k]=B[3j-2~..~3j] ~ and ~ A'[k-2~..~k]=A[3i-2~..~3i]\}\\
   &&\\ 
  U_{B\rightarrow A} &  = & \{ k, k \leq L ~|~  k \notin M_{B\rightarrow A} ~ and ~ \exists ~ j  ~ s.t. ~ B'[k-2~..~k]=A[3j-2~..~3j]  ~ and ~ |A'[k-2~..~k]| = 3\}\\
  &&\\
  Indel_{B\rightarrow A} & = & \{ k, k \leq L ~|~  \exists ~ j  ~ s.t. ~ B'[k-2~..~k]=B[3j-2~..~3j]~ and ~ |A'[k-2~..~k]| = 0\}\\
  &&\\
%
%
  MFS_{B\rightarrow A} & = & \{ k, k \leq L ~|~  \{k,k+1,k+2\} \cap  (M_{B\rightarrow A}\cup U_{B\rightarrow A}\cup InDel_{B\rightarrow A}) = \emptyset ~ and \\
  & & \exists ~ (j,i)  ~ s.t. ~ B'[k]=B[j] ~ and ~ A'[k]=A[i]\}
    \end{array} \]

  \normalsize
The score of the alignment $(A',B')$ is defined by :

    \[ \begin{array}{lll}
      \texttt{score}(A') & =& \sum_{k \in \texttt{M}_{A\rightarrow B}}{\frac{s_{aa}(A'[k-2~..~k],B'[k-2~..~k])}{2}} ~ +\\

      & & \sum_{k \in \texttt{U}_{A\rightarrow B}}{( \frac{s_{aa}(A'[k-2~..~k],B'[k-2~..~k])}{2} + \texttt{fs\_extension\_cost} )} ~ +\\

      & &  |\texttt{Indel}_{A\rightarrow B}| * \texttt{gap\_cost} ~ +\\

      & & (\frac{|A|}{3} -  |\texttt{M}_{A\rightarrow B}| - |\texttt{U}_{A\rightarrow B}| - |\texttt{InDel}_{A\rightarrow B}|) * \texttt{fs\_open\_cost} ~ + \\

      & & \sum_{k \in \texttt{MFS}_{A\rightarrow B}}{\frac{s_{an}(A'[k],B'[k])}{2}}
          \end{array} \]
\[ \begin{array}{lll}
  \texttt{score}(B') & =& \sum_{k \in \texttt{M}_{B\rightarrow A}}{\frac{s_{aa}(B'[k-2~..~k],A'[k-2~..~k])}{2}} +\\
   & & \sum_{k \in \texttt{U}_{B\rightarrow A}}{ (\frac{s_{aa}(B'[k-2~..~k],A'[k-2~..~k])}{2} + \texttt{fs\_extension\_cost})} +\\
  & &  |\texttt{Indel}_{B\rightarrow A}| * \texttt{gap\_cost} +\\
  & & (\frac{|B|}{3} -  |\texttt{M}_{B\rightarrow A}| - |\texttt{U}_{B\rightarrow A}| - |\texttt{InDel}_{B\rightarrow A}|) * \texttt{fs\_open\_cost} + \\  
  & & \sum_{k \in \texttt{MFS}_{B\rightarrow A}}{\frac{s_{an}(B'[k],A'[k])}{2}}\\
&& \\
  
    \texttt{score}(A',B') & = &  \texttt{score}(A') +  \texttt{score}(B')
    
    \end{array}
\]

    
\end{definition}

 




For example, consider  the two following sequences, $A$ containing
$13$ codons and $B$ containing $14$ codons,
and an alignment of length $48$ between them.

\noindent
$\texttt{A: \ac{ATG}ACC\ac{GAA}TCC\ac{AAG}CAG\ac{CCC}TGG\ac{CCA}GAT\ac{CAA}CGT\ac{TGA}}$\\
$\texttt{~~~~M~~T~~E~~S~~K~~Q~~P~~W~~P~~D~~Q~~R~~*}$\\
$\texttt{B: \ac{ATG}GAG\ac{TCG}AAG\ac{ATC}AGC\ac{TGG}CAG\ac{GCC}ATT\ac{GGC}AAT\ac{GAC}TGA}$\\
$\texttt{~~~~M~~E~~S~~K~~I~~S~~W~~Q~~A~~I~~G~~N~~D~~*~}$\\
\noindent
$\texttt{\bf An alignment (A',B') of length 48 between A and B:}$\\
$\texttt{pos 000000000111111111122222222223333333333444444444}$\\
$\texttt{~~~~123456789012345678901234567890123456789012345678}$\\
$\texttt{~~~~~M~~T~~E~~S~~K~~~~Q~~P~~W~~P~~~~~D~~~~~Q~~R~~~*}$\\
$\texttt{A'~~\mc{ATG}\ic{ACC}\mc{GAA}\mc{TCC}\mc{AAG}--\uc{CAG}\fc{CCC}\mc{TGG}\fc{CCA}\fc{G}---\fc{AT}---\uc{CAA}\fc{CG-T}\mc{TGA}}$\\
$\texttt{B'~~\mc{ATG}---\mc{GAG}\mc{TCG}\mc{AAG}\fc{ATC}\uc{AGC}--\mc{TGG}-\uc{CAG}\ic{GCC}\fc{ATT}\fc{GGC}\uc{AAT}\fc{GAC}\mc{TGA}}$\\
$\texttt{~~~~~M~~~~~E~~S~~K~~I~~S~~~~W~~~Q~~A~~I~~G~~N~~D~~*}$\\

The composition of the different sets of codons and nucleotides used in
the definition of the score of the alignment $(A',B')$ are:
\noindent
 \mc{$ \texttt{M}_{A\rightarrow B}  =  \{ 3, 9, 12, 15, 26, 48\} $};
 \uc{$ \texttt{U}_{A\rightarrow B}   =  \{20, 41\} $};
 \ic{$ \texttt{Indel}_{A\rightarrow B}  =  \{ 6 \} $}; 
$ \texttt{MFS}_{A\rightarrow B}  =  \{21,28,29,30,34,35,42,43,45\}$;
 \mc{$ \texttt{M}_{B\rightarrow A}  =  \{ 3, 9, 12, 15, 26, 48\}$};
 \uc{$ \texttt{U}_{B\rightarrow A}   =  \{ 21, 30, 42\}$};
 \ic{$ \texttt{Indel}_{B\rightarrow A}  =  \{ 33\}$}; and\\
 $ \texttt{MFS}_{B\rightarrow A}  =  \{ 18, 34, 35, 39, 43, 45\}. $


\section{Algorithm}
\label{algorithm}

In this section, we describe a $O(n.m)$ time and space complexity algorithm
that solves the problem of finding a maximum score alignment between two
coding sequences $A$ and $B$ of lengths $n$ and $m$.
Similarly to other sequence comparison methods \cite{needle,smith}, we use
dynamic programming tables of size $n+1\times m+1$ that are indexed by the
pairs of prefixes of the two coding sequences. The table $D$ stores the
maximum scores of the alignments between prefixes of $A$ and $B$. The table
$D_F$ is used to account for potential cases of frameshift extensions that
are counted subsequently.

\begin{definition}[Dynamic programming tables]
  \label{table}
  Given two coding sequences $A$ and $B$ as input, the algorithm uses
  two dynamic programming tables $D$ and $D_F$ of size $n+1\times m+1$.
  The cell $D(i,j)$ contains the maximum score of an alignment between
  the prefixes $A[1~..~i]$ and $B[1~..~j]$.
  The table $D_F$ is filled only for values of $i$ and $j$ such that
  $i (mod~3) = 0$ or $j (mod~3) = 0$.
  If $i (mod~3) \neq 0$
  (resp. $j (mod~3) \neq 0$), the cell $D_F(i,j)$ contains the score of an
  alignment between the prefixes $A[1~..~i+\alpha]$ and $B[1~..~j+\alpha]$ where
  $\alpha = (3-i) (mod~3)$ (resp. $\alpha = (3-j) (mod~3)$).
  The table $D_F$ is filled as follows:
  \begin{itemize}
    \item If $i (mod~3) = 0$ and $j (mod~3) = 0$, $D_F(i,j) = D(i,j)$.
    \item If $i (mod~3) = 0$ and $j (mod~3) = 2$, or $i (mod~3) = 2$
      and $j (mod~3) = 0$, $D_F(i,j)$ contains the maximum score of
      an alignment between $A[1~..~i+1]$ and $B[1~..~j+1]$ such that
      $A[i+1]$ and $B[j+1]$ are aligned together, and half of
      the score for aligning $A[i+1]$ with $B[i+1]$ is subtracted.
      
    \item If $i (mod~3) = 0$ and $j (mod~3) = 1$, or $i (mod~3) = 1$ and
      $j (mod~3) = 0$, $D_F(i,j)$ contains the maximum score of an alignment
      between $A[1~..~i+2]$ and $B[1~..~j+2]$ such that $A[i+1]$,$B[j+1]$ and
        $A[i+2]$,$B[j+2]$ are aligned together, and half
        of the scores of aligning $A[i+2]$, $B[i+2]$, and $A[i+1]$,
        $B[i+1]$ is subtracted. 
 \end{itemize}
\end{definition}

\begin{lemma}[Filling up table D]
  \label{D}
  \begin{enumerate}
  \item {\bf If $i (mod~3) = 0$ and $j (mod~3) = 0$}
    \scriptsize
      \[ D(i,j) = \max \left\{
  \begin{array}{ll}
    1. & s_{aa}(A[i-2~..~i],B[j-2~..~j]) + D(i-3,j-3)\\
    
    2. & s_{an}(A[i],B[j]) + s_{an}(A[i-1],B[j-1]) + D(i-3,j-2) + 2 * \texttt{fs\_open\_cost}\\
    
    3. & s_{an}(A[i],B[j]) + s_{an}(A[i-2],B[j-1]) + D(i-3,j-2) + 2 * \texttt{fs\_open\_cost}\\
    
    4. & s_{an}(A[i],B[j]) + D(i-3,j-1) + 2* \texttt{fs\_open\_cost}\\
    
    5. & s_{an}(A[i],B[j]) + s_{an}(A[i-1],B[j-1]) + D(i-2,j-3) + 2 * \texttt{fs\_open\_cost}\\
    
    6. & s_{an}(A[i],B[j]) + s_{an}(A[i-1],B[j-2]) + D(i-2,j-3) + 2 * \texttt{fs\_open\_cost}\\

    7. & s_{an}(A[i],B[j]) + D(i-1,j-3) + 2* \texttt{fs\_open\_cost}\\

    8. & s_{an}(A[i],B[j]) + D(i-1,j-1) + 2 * \texttt{fs\_open\_cost}\\

    9. & \frac{s_{an}(A[i-1],B[j])}{2} + \frac{s_{an}(A[i-2],B[j-1])}{2} + D_F(i-3,j-2) + \texttt{fs\_open\_cost}\\
    
    10. & s_{an}(A[i-1],B[j]) + D(i-3,j-1) + 2 * \texttt{fs\_open\_cost}\\
    
    11. & \frac{s_{an}(A[i-2],B[j])}{2} + D_F(i-3,j-1) + \texttt{fs\_open\_cost}\\
    
    12. & \texttt{gap\_cost} + D(i-3,j) \\
    
    13. & D(i-1,j) + \texttt{fs\_open\_cost}\\
    
    14. & \frac{s_{an}(A[i],B[j-1])}{2} + \frac{s_{an}(A[i-1],B[j-2])}{2} + D_F(i-2,j-3) + \texttt{fs\_open\_cost}\\
    
    15. & s_{an}(A[i],B[j-1]) + D(i-1,j-3) + 2 * \texttt{fs\_open\_cost}\\
    
    16. & \frac{s_{an}(A[i],B[j-2])}{2} + D_F(i-1,j-3) + \texttt{fs\_open\_cost}\\
    
    17. & \texttt{gap\_cost} + D(i,j-3) \\
    
    18. & D(i,j-1) + \texttt{fs\_open\_cost}\\  
  \end{array}
\right.
\]

    \normalsize
    \item {\bf If $i (mod~3) = 0$ and $j (mod~3) \neq 0$}

      \scriptsize
      \[ D(i,j) = \max \left\{
  \begin{array}{ll}
  1. &   \frac{s_{aa}(A[i-2~..~i],B[j-2~..~j])}{2} + D_F(i-3,j-3) + \texttt{fs\_extension\_cost}\\
   & + \frac{s_{an}(A[i],B[j])}{2} (+ \frac{s_{an}(A[i-1],B[j-1])}{2} ~if ~j-1 (mod~3) \neq 0)\\
    
  2. &   s_{an}(A[i],B[j]) + s_{an}(A[i-1],B[j-1]) + D(i-3,j-2) + \texttt{fs\_open\_cost} \\
   & (+ \texttt{fs\_open\_cost} ~if ~j-1 (mod~3) = 0)\\

  3. &  s_{an}(A[i],B[j]) + s_{an}(A[i-2],B[j-1]) + D_F(i-3,j-2) + \texttt{fs\_open\_cost}\\
  & (- \frac{s_{an}(A[i-2],B[j-1])}{2} ~if ~j-1 (mod~3) = 0)\\
    
  4. &  s_{an}(A[i],B[j]) + D(i-3,j-1) + \texttt{fs\_open\_cost}\\
      
  5. &  s_{an}(A[i],B[j]) + D(i-1,j-1) + \texttt{fs\_open\_cost}\\

  6. &  s_{an}(A[i-1],B[j]) + s_{an}(A[i-2],B[j-1]) + D_F(i-3,j-2) + \texttt{fs\_open\_cost}\\
   & (- \frac{s_{an}(A[i-2],B[j-1])}{2} ~if ~j-1 (mod~3) = 0)\\
      
  7. & s_{an}(A[i-1],B[j]) + D(i-3,j-1) + \texttt{fs\_open\_cost}\\
    
  8. &  s_{an}(A[i-2],B[j]) + D(i-3,j-1) + \texttt{fs\_open\_cost}\\

  9. &   \texttt{gap\_cost} + D(i-3,j) \\

  10. & D(i-1,j) + \texttt{fs\_open\_cost}\\

  11. &  D(i,j-1)\\
    
  \end{array}
\right.
\]

    \normalsize
      \item {\bf If $i (mod~3) \neq 0$ and $j (mod~3) = 0$}, the equation is symmetric to the previous case.

      \item {\bf If $i (mod~3) \neq 0$ and $j (mod~3) \neq 0$}

      \scriptsize
      $ D(i,j) = \max \left\{
  \begin{array}{ll}
    1. & s_{an}(A[i],B[j]) + D(i-1,j-1)\\

    2. & D(i-1,j)\\
      
    3. & D(i,j-1)\\    
  \end{array}
\right.
$
      \normalsize

  \end{enumerate}
\end{lemma}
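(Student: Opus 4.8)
The plan is to prove the four recurrences by induction on the total prefix length $i+j$, establishing for each cell two inequalities: that $D(i,j)$ is at least the right-hand maximum (feasibility) and that it is at most this maximum (optimality). The induction hypothesis asserts that every entry $D(i',j')$ and $D_F(i',j')$ referring to a strictly shorter pair of prefixes already holds the value promised by Definition~\ref{table} and the present lemma; one checks that each $D$ and $D_F$ term appearing on the right-hand sides indeed concerns strictly shorter prefixes, so the induction is well founded, with $D_F$ handled in the same framework via Definition~\ref{table}. The base cases are the alignments of a prefix against the empty prefix, scored entirely by \texttt{gap\_cost} and \texttt{fs\_open\_cost} contributions, which I would verify directly.

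For the feasibility direction I would read each enumerated term as a recipe that appends a fixed final block of columns to an optimal alignment of a shorter pair of prefixes. For instance, in case~1 term~1 appends the three columns aligning the grouped codons $A[i-2~..~i]$ and $B[j-2~..~j]$, contributing $s_{aa}(A[i-2~..~i],B[j-2~..~j])$ split as $\frac{s_{aa}}{2}$ into each of $\texttt{score}(A')$ and $\texttt{score}(B')$, on top of $D(i-3,j-3)$; the remaining terms append the various frameshift, gap, and single-nucleotide final blocks. In each case I would check that the appended block is a legal alignment suffix under the definition of an alignment between DNA sequences, and that the value added equals exactly the score contribution dictated by the Score definition for the relevant codon type (\texttt{M}, \texttt{U}, \texttt{InDel}, or a frameshift codon scored per nucleotide). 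This shows $D(i,j)$ is achievable, hence at least the maximum.

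The optimality direction is the heart of the argument. Here I would take an optimal alignment $(A',B')$ of $A[1~..~i]$ and $B[1~..~j]$ and classify its final columns according to the type of the codon(s) ending at positions $i$ and $j$, using exactly the partition into $\texttt{M}$, $\texttt{U}$, $\texttt{Indel}$, $\texttt{FS}^-$, $\texttt{FS}^+$ and $\texttt{MFS}$ from the Score definition. Each admissible final configuration corresponds to one enumerated term: when $i (mod~3) = 0$ and $j (mod~3) = 0$ both last codons are grouped and complete, and the eighteen terms exhaust the ways the last codon of $A$, of $B$, or both can be matched, unmatched, deleted, or broken by a frameshift; when exactly one index lies at a codon boundary (case~2) the last codon on that side is complete while the other side is mid-codon, forcing the per-nucleotide terms and the conditional \texttt{fs\_extension\_cost}/\texttt{fs\_open\_cost} adjustments; and when neither index is at a boundary (case~4) no codon is completed, so only the three classical single-nucleotide moves remain. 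Removing the identified final block leaves a valid alignment of strictly shorter prefixes whose score is bounded by the appropriate $D$ or $D_F$ entry through the induction hypothesis, giving $D(i,j)$ at most the maximum.

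The main obstacle will be the bookkeeping around the auxiliary table $D_F$ and the half-score convention. Because every codon substitution score is split evenly between $\texttt{score}(A')$ and $\texttt{score}(B')$, and because a codon may be grouped on one sequence while the aligned region on the other sequence straddles a codon boundary, the contribution of a single final block is genuinely asymmetric: part of a nucleotide or codon score must be charged at one index parity and the complementary half deferred until the partner sequence reaches its own codon boundary. This is precisely what $D_F$ encodes, per Definition~\ref{table}, and the delicate step is to verify that the terms invoking $D_F$ (terms~9, 11, 14, 16 of case~1 and terms~1, 3, 6 of case~2), together with their conditional ``$+~\texttt{fs\_open\_cost}$'' and ``$-~\frac{s_{an}}{2}$'' corrections triggered when $j-1 (mod~3) = 0$, reassemble to exactly the total prescribed by the Score definition, with no half-score double counted or dropped. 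Once this reconciliation is checked term by term, the two inequalities close and the lemma follows.
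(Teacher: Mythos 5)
Your proposal is correct and follows essentially the same route as the paper: an exhaustive case analysis of the final alignment columns keyed on whether $i$ and $j$ fall on codon boundaries, on how $A[i]$ and $B[j]$ are aligned (together, or against a gap), and on whether the last codons are grouped, with each admissible configuration matched to one enumerated term and the table $D_F$ absorbing the deferred half-scores. The paper's proof presents only the enumeration of configurations and their score contributions, leaving the induction and the feasibility/optimality inequalities implicit, so your explicit two-inequality scaffolding is a formalization of the same argument rather than a different one.
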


\newpage

\begin{figure}[H]
\centering
Case 1. $i (mod~3) = 0$ and $j (mod~3) = 0$
\includegraphics[width=\textwidth]{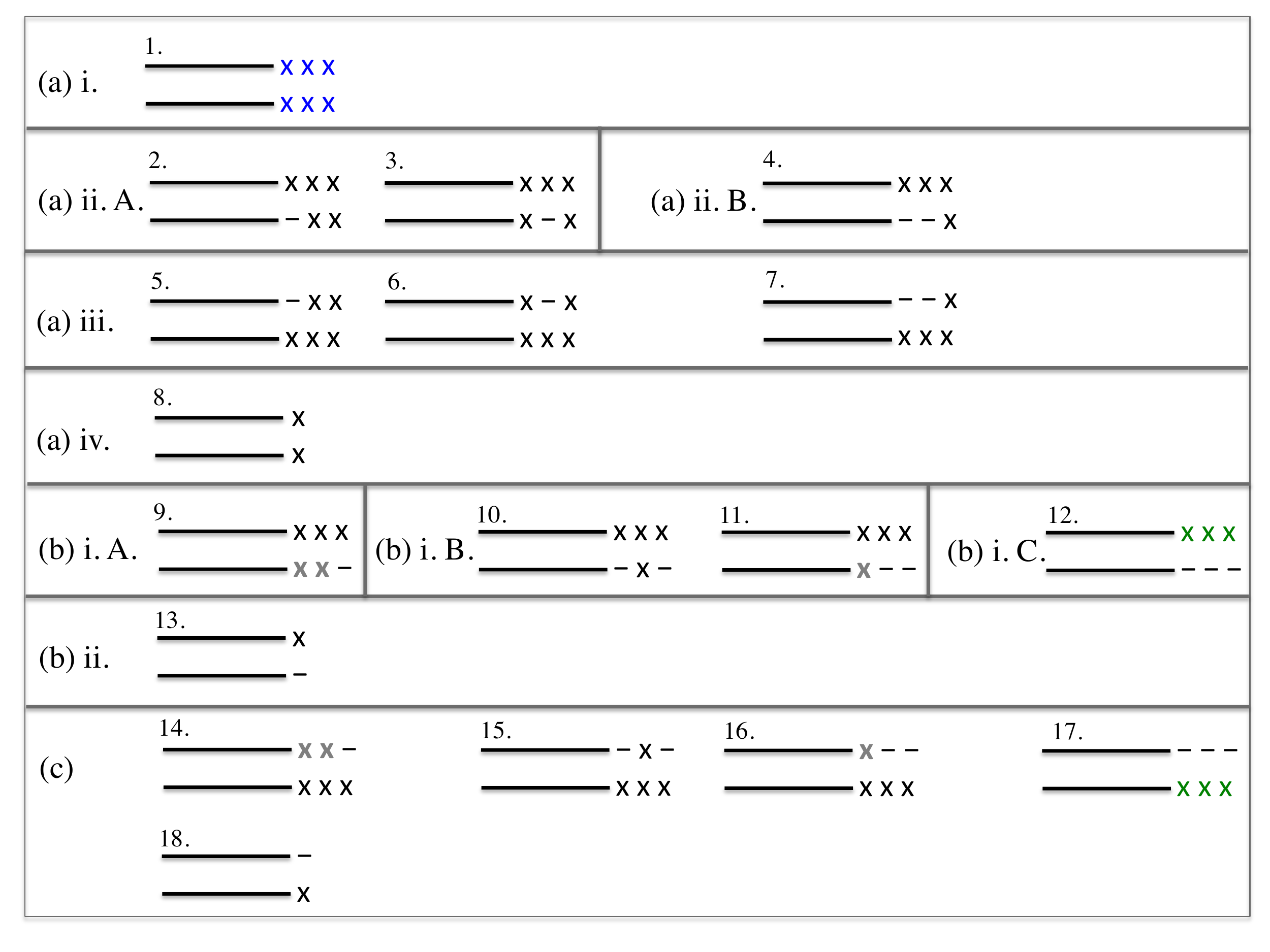}
Case 2. $i (mod~3) = 0$ and $j (mod~3) \neq 0$
\includegraphics[width=\textwidth]{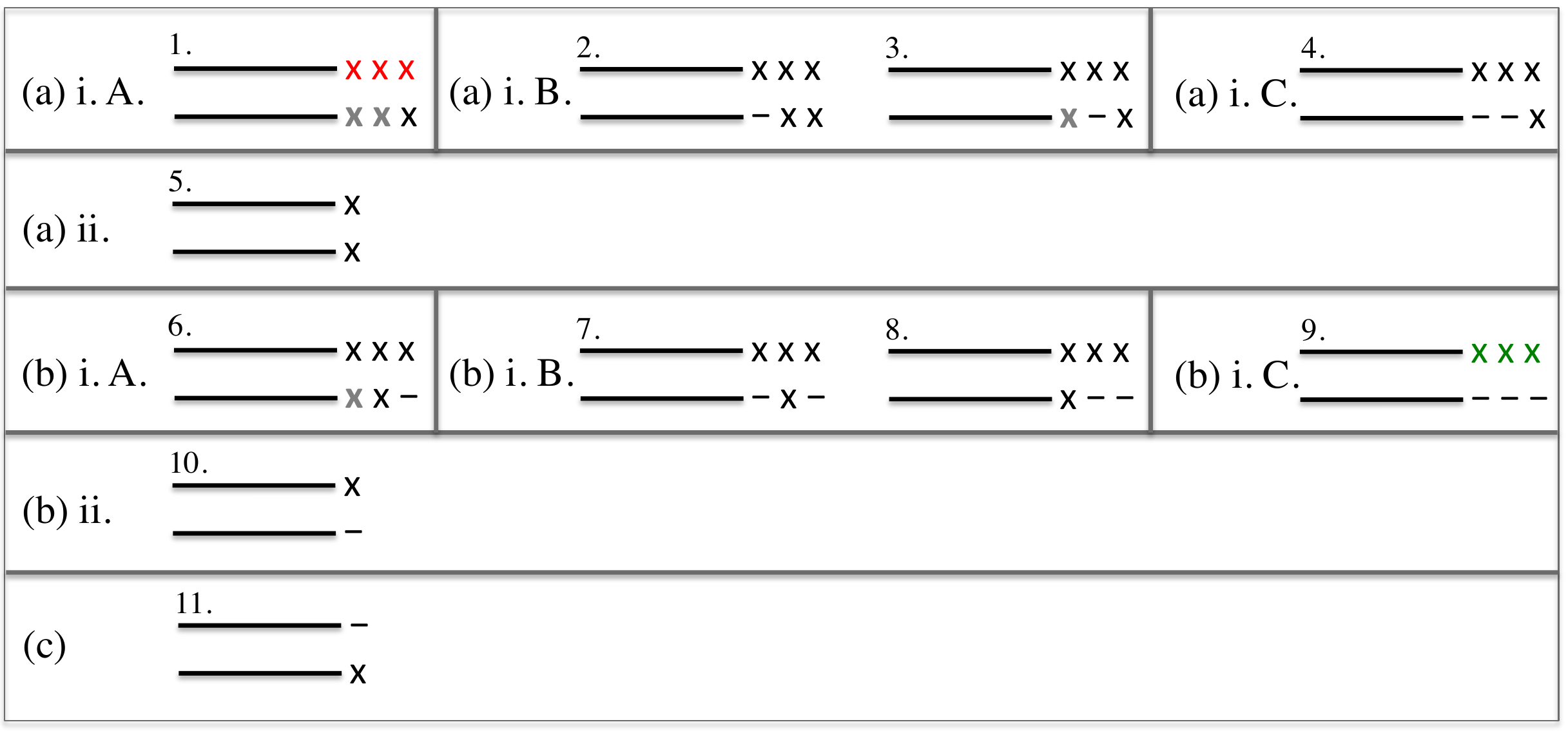}
\caption{Illustration of the configurations
  of alignment considered in Lemma \ref{D}
  for computing $D(i,j)$ in the cases 1 and 2.
  The right-most nucleotides of the sequences $A[1~..~i]$
  and $B[1~..~j]$ are represented using the character $\texttt{x}$.
  The nucleotides are colored according to the type of the codon
  to which they belong : matching codons (M) in blue color, unmatching
  codons (U) in red color, inserted/deleted codons (Indel) in green color,
  and frameshift codons (FS) in black color. The nucleotides that
  appear in gray color are those belonging to codons whose type has not
  yet been decided. In such case, the table $D_F$ is used in order
  to decide of the type of these codons later, and adjust the score accordingly.
}
\label{fig:D1}
\end{figure}

\newpage
\begin{proof}[Proof of Lemma \ref{D}] The principle of the proof is
  similar to the one for the alignment of non-coding sequences \cite{needle}.
  For each case, the score $D(i,j)$ is the maximum score of all possible
  alignment configurations that are considered for this case. Here, we only
  describe the alignment configurations considered in the case 1 where
  $i (mod~3) = 0$ and $j (mod~3) = 0$.   A complete proof for all the cases
  of the Lemma is given in Appendix. An illustration of the different configurations of
  alignment considered for the cases 1 and 2 is shown in Figure
  \ref{fig:D1}.

    \begin{enumerate}
   \item {\bf If $i (mod~3) = 0$ and $j (mod~3) = 0$}, there are
     three cases depending on the alignment of $A[i]$ and $B[j]$.
      \begin{enumerate}
      \item {\bf If $A[i]$ and $B[j]$ are aligned together}, there are four cases
        depending on whether $A[i-2~..~i]$ and $B[j-2~..~j]$ are grouped in
        the alignment or not.
        \begin{enumerate}
        \item {\bf If both $A[i-2~..~i]$ and $B[j-2~..~j]$ are grouped}, then
          $A[i-2~..~i]$ and $B[j-2~..~j]$  have to be aligned together, and the
          score of the alignment is:\\
          1.  $s_{aa}(A[i-2~..~i],B[j-2~..~j]) + D(i-3,j-3)$
        \item {\bf If  $A[i-2~..~i]$ is grouped while  $B[j-2~..~j]$ is not grouped},
          then both $A[i-2~..~i]$ and $B[j-2~..~j]$ are FS codons ($A[i-2~..~i]$ is
          a FS$^-$ codon while $B[j-2~..~j]$ is a FS$^+$ codon). We add
          $2 * \texttt{fs\_open\_cost}$ to the score of the alignment, and
          the alignment of the nucleotides of the two FS codons can be scored
          independently using the scoring function $s_{an}$.
         There are two cases depending on the number of  nucleotides from
          $B[j-2~..~j]$ that are aligned with $A[i-2~..~i]$, two or one:
            \begin{enumerate}
            \item {\bf If  $A[i-2~..~i]$ is aligned with two nucleotides}, then these
              nucleotides are $B[j-1]$ and $B[j]$.
              There are two cases depending on the alignment of the nucleotide
              $B[j-1]$ with $A[i-1]$ or $A[i-2]$:\\
              2. $s_{an}(A[i],B[j]) + s_{an}(A[i-1],B[j-1]) + D(i-3,j-2) + 2 * \texttt{fs\_open\_cost}$\\
              3. $s_{an}(A[i],B[j]) + s_{an}(A[i-2],B[j-1]) + D(i-3,j-2) + 2 * \texttt{fs\_open\_cost}$
            \item {\bf If  $A[i-2~..~i]$ is aligned with one nucleotide},  then
              this single nucleotide is $B[j]$, and the score of the alignment is:\\
             4. $s_{an}(A[i],B[j]) + D(i-3,j-1) + 2 * \texttt{fs\_open\_cost}$
            \end{enumerate}
          \item  {\bf If  $A[i-2~..~i]$ is not grouped while $B[j-2~..~j]$ is grouped},
            there are three cases that are symmetric to the three cases from
            (a)ii.:\\
            5. $s_{an}(A[i],B[j]) + s_{an}(A[i-1],B[j-1]) + D(i-2,j-3) + 2 * \texttt{fs\_open\_cost}$\\
            6. $s_{an}(A[i],B[j]) + s_{an}(A[i-1],B[j-2]) + D(i-2,j-3) + 2 * \texttt{fs\_open\_cost}$\\
            7. $s_{an}(A[i],B[j]) + D(i-1,j-3) + 2 * \texttt{fs\_open\_cost}$
          \item {\bf If both $A[i-2~..~i]$ and $B[j-2~..~j]$ are not grouped}, then again
        both $A[i-2~..~i]$ and $B[j-2~..~j]$ are FS codons (both are FS$^+$ codons):\\
        8. $s_{an}(A[i],B[j]) + D(i-1,j-1) + 2 * \texttt{fs\_open\_cost}$\\
        \end{enumerate}
        
      \item {\bf If $A[i]$ is aligned with a gap}, then the codon  $A[i-2~..~i]$
        is a FS codon (FS$^-$ or FS$^+$). We must add $\texttt{fs\_open\_cost}$
        to the score of
        the alignment. There are two cases depending on whether
        $A[i-2~..~i]$ is grouped in the alignment or not.
        \begin{enumerate}
        \item {\bf If $A[i-2~..~i]$ is grouped}, then there are three cases depending
          on the number of  nucleotides from $B[j-2~..~j]$ that are aligned
          with $A[i-2~..~i]$, two, one, or zero.
            \begin{enumerate}
            \item {\bf If  $A[i-2~..~i]$ is aligned with two nucleotides}, then
              these
              nucleotides are $B[j-1]$ and $B[j]$. The score of the alignment is:\\
              9. $\frac{s_{an}(A[i-1],B[j])}{2} + \frac{s_{an}(A[i-2],B[j-1])}{2} + D_F(i-3,j-2) + \texttt{fs\_open\_cost}$
            \item {\bf If  $A[i-2~..~i]$ is aligned with one nucleotide}, then this
              single nucleotide is $B[j]$. There two cases depending on
              the alignment of the nucleotide $B[j]$ with $A[i-1]$ or $A[i-2]$:\\
              10. $s_{an}(A[i-1],B[j]) + D(i-3,j-1) + 2 * \texttt{fs\_open\_cost}$\\
              11. $\frac{s_{an}(A[i-2],B[j])}{2} + D_F(i-3,j-1) + \texttt{fs\_open\_cost}$
            \item {\bf If $A[i-2~..~i]$ is aligned with zero nucleotide}, then
              the codon
        $A[i-2~..~i]$ is entirely deleted. The score of the alignment is: \\
              12. $\texttt{gap\_cost} + D(i-3,j)$ 
            \end{enumerate}
          \item  {\bf If $A[i-2~..~i]$ is not grouped}, then the codon $A[i-2~..~i]$
            is a FS$^+$ codon, and the score of the alignment is:\\
        13. $D(i-1,j) + \texttt{fs\_open\_cost}$\\  
        \end{enumerate}

      \item {\bf If $B[i]$ is aligned with a gap}, there are fives cases that
        are symmetric
        to the five cases from (b):\\
        14. $\frac{s_{an}(A[i],B[j-1])}{2} + \frac{s_{an}(A[i-1],B[j-2])}{2} + D_F(i-2,j-3) + \texttt{fs\_open\_cost}$\\
        15. $s_{an}(A[i],B[j-1]) + D(i-1,j-3) + 2 * \texttt{fs\_open\_cost}$\\  
        16. $\frac{s_{an}(A[i],B[j-2])}{2} + D_F(i-1,j-3) + \texttt{fs\_open\_cost}$\\
         17. $\texttt{gap\_cost} + D(i,j-3)$\\
         18. $D(i,j-1) + \texttt{fs\_open\_cost}$\\  
      \end{enumerate}

  \end{enumerate}

\end{proof}

\begin{lemma}[Filling up table $D_F$]
    \label{DF}
  \begin{enumerate}
  \item {\bf If $i (mod~3) = 0$ and $j (mod~3) = 0$}\\
    \scriptsize
    $ D_F(i,j) = D(i,j)$\\
    
    \normalsize
  \item {\bf If $i (mod~3) = 2$ and $j (mod~3) = 0$}\\
    \scriptsize    
      \[ D_F(i,j) = \max \left\{
  \begin{array}{ll}
    1. & \frac{s_{aa}(A[i-1~..~i+1],B[j-1~..~j+1])}{2} + D_F(i-2,j-2) + \texttt{fs\_extension\_cost}\\
    
    2. & \frac{s_{an}(A[i+1],B[j+1])}{2} + s_{an}(A[i],B[j]) + D(i-2,j-1) + 2 * \texttt{fs\_open\_cost}\\
    
    3. & \frac{s_{an}(A[i+1],B[j+1])}{2} + \frac{s_{an}(A[i-1],B[j])}{2} + D_F(i-2,j-1) + \texttt{fs\_open\_cost}\\

    4. & \frac{s_{an}(A[i+1],B[j+1])}{2}  + D(i-2,j) + \texttt{fs\_open\_cost}\\

    5. & \frac{s_{an}(A[i+1],B[j+1])}{2}  + D(i,j) + \texttt{fs\_open\_cost}\\
  \end{array}
\right.
\]
    \normalsize
  \item {\bf If $i (mod~3) = 0$ and $j (mod~3) = 2$}, the equation is symmetric to the previous case.\\

\item {\bf If $i (mod~3) = 1$ and $j (mod~3) = 0$}\\
    \scriptsize    
      \[ D_F(i,j) = \max \left\{
  \begin{array}{ll}
    1. & \frac{s_{aa}(A[i~..~i+2],B[j~..~j+2])}{2} + D_F(i-1,j-1) + \texttt{fs\_extension\_cost}\\
    
    2. & \frac{s_{an}(A[i+2],B[j+2])}{2} + \frac{s_{an}(A[i+1],B[j+1])}{2}  + D(i-1,j) + \texttt{fs\_open\_cost}\\
    
    3. & \frac{s_{an}(A[i+2],B[j+2])}{2} + \frac{s_{an}(A[i+1],B[j+1])}{2}  + D(i,j) + \texttt{fs\_open\_cost}\\
  \end{array}
\right.
\]
    \normalsize

  \item {\bf If $i (mod~3) = 0$ and $j (mod~3) = 1$}, the equation is symmetric to the previous case.
  \end{enumerate}
\end{lemma}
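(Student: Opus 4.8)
The plan is to prove Lemma \ref{DF} by the same exhaustive-case-analysis scheme used for Lemma \ref{D}, now applied to the quantity that Definition \ref{table} prescribes for $D_F(i,j)$. I take as inductive hypothesis that $D$ and $D_F$ already hold their intended values at all strictly smaller indices, and I invoke Lemma \ref{D} for the entries of $D$ --- in particular for $D(i,j)$ itself, which is computed before $D_F(i,j)$ and may therefore be referenced by its recurrence (as in option 5 of case 2). The case $i (mod~3) = 0$ and $j (mod~3) = 0$ is immediate, since Definition \ref{table} sets $D_F(i,j)=D(i,j)$ outright. Cases 3 and 5 follow from cases 2 and 4 by the $A \leftrightarrow B$ symmetry of $\texttt{score}(A',B')$, so it suffices to treat cases 2 ($i (mod~3) = 2$, $j (mod~3) = 0$) and 4 ($i (mod~3) = 1$, $j (mod~3) = 0$).

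For case 2 I would first unfold the definition: $D_F(i,j)$ is the maximum, over all alignments of $A[1~..~i+1]$ and $B[1~..~j+1]$ in which $A[i+1]$ and $B[j+1]$ share a column, of the alignment score diminished by $\frac{s_{an}(A[i+1],B[j+1])}{2}$. Since $i+1 (mod~3) = 0$, the triple $A[i-1~..~i+1]$ is the last complete $A$-codon present in the prefix, and I would partition the admissible alignments according to its type and the number of $B$-nucleotides aligned with it, exactly mirroring the sub-case tree of Lemma \ref{D}. This produces five mutually exclusive and collectively exhaustive branches: $A[i-1~..~i+1]$ grouped and aligned with the three consecutive non-codon nucleotides $B[j-1~..~j+1]$, hence an unmatching (U) codon (option 1); grouped and aligned with two nucleotides in either of the two possible placements, the remaining $A$-nucleotide sitting over a gap (options 2 and 3); grouped and aligned with the single nucleotide $B[j+1]$ (option 4); and not grouped, i.e. a FS$^+$ codon (option 5). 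For each branch I would read off the score contribution of this last codon, substitute the inductively correct predecessor value ($D$ or $D_F$) for the remaining prefix, and verify the sum equals the stated option. Case 4 is analogous but shorter: because both look-ahead columns, $A[i+1]\sim B[j+1]$ and $A[i+2]\sim B[j+2]$, lie strictly inside the last $A$-codon $A[i~..~i+2]$, only three branches survive (U codon via $D_F$, FS$^-$ with two aligned nucleotides via $D$, and FS$^+$ via $D$).

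The delicate point throughout --- and the step I expect to be the main obstacle --- is the reconciliation of the fractional-score bookkeeping demanded by Definition \ref{table}, under which $D_F(i,j)$ is a full alignment score with half of each look-ahead nucleotide score removed. For every branch I must check that the net contribution of each look-ahead pair (its contribution to the full score, minus the definitional half) coincides with the explicit $s_{an}$ term that the option carries: in option 1, for instance, the pair $A[i+1]\sim B[j+1]$ enters the full score as a matching frameshift nucleotide with weight $\frac{s_{an}(A[i+1],B[j+1])}{2}$ and is then exactly cancelled by the definitional subtraction, which is why option 1 carries no such term at all, whereas in options 2--5 the pair survives with weight $\frac{s_{an}(A[i+1],B[j+1])}{2}$. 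At the same time I must confirm that a U codon is charged $\frac{s_{aa}}{2}+\texttt{fs\_extension\_cost}$ from its own frame only, that each completed frameshift codon incurs $\texttt{fs\_open\_cost}$ exactly once, and that a codon whose last nucleotide has been pulled forward onto a look-ahead column is left type-undecided and uncharged here, its cost being deferred to the predecessor --- precisely the distinction between option 3 (recursing into $D_F$ with one opening cost) and option 2 (recursing into $D$ with two). All of this leans on the convention, established in the proof of Lemma \ref{D}, that $D$ and $D_F$ score the trailing nucleotides of an incomplete codon at the nucleotide level without prematurely committing an opening cost. Option 1 is the genuine crux, being the only branch that both introduces $\texttt{fs\_extension\_cost}$ and recurses within $D_F$: this self-reference is what lets a maximal run of consecutive unmatching codons accumulate one extension penalty per codon, the very behaviour the construction is designed to capture. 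Once the identity is settled for option 1, the remaining branches reduce to nucleotide-level accounting already validated in Lemma \ref{D}, and verifying exhaustiveness of the branches completes the proof.
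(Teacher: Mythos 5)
Your proposal is correct and in substance identical to the paper's proof: the paper obtains the five (resp.\ three) options by applying Lemma~\ref{D}, Case~2, at the shifted indices $(i+1,j+1)$ (resp.\ $(i+2,j+2)$), keeping only the branches in which the look-ahead nucleotides are aligned together, and subtracting the half-scores $\frac{s_{an}(A[i+1],B[j+1])}{2}$ (and $\frac{s_{an}(A[i+2],B[j+2])}{2}$) that will be re-added later --- exactly the case tree and fractional bookkeeping you describe. The only cosmetic difference is that you re-derive the branches directly from Definition~\ref{table} rather than citing the already-established recurrence for $D(i+1,j+1)$ and filtering it, which changes nothing in the argument.
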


The proof of Lemma   \ref{DF} follows from Lemma \ref{D}. It is given in Appendix.
We  now present the alignment algorithm using Lemma \ref{D}  and  \ref{DF}
in the next  theorem.

\begin{theorem}
  \label{thm}
  Given two coding sequences $A$ and $B$ of lengths $n$ and $m$,
  a maximum score alignment between $A$ and $B$ can be found in
  time and space  $O(n \times m)$, using the following algorithm.\\

  \scriptsize    
  \noindent
  $\texttt{Algorithm Align(A,B)}$\\
  $\texttt{~~for i = 0 to n do}$\\
      $\texttt{~~~~~~}D(i,0) = floor(\frac{i}{3}) * \texttt{gap\_cost}$\\
      $\texttt{~~~~~~}D_F(i,0) = D(i,0) + \left\{
   \begin{array}{ll}
    \frac{s_{an}(A[i+1],B[1])}{2} + \frac{s_{an}(A[i+2],B[2])}{2} + \texttt{fs\_open\_cost}, &  \texttt{if ~i (mod~3) = 1}\\
    \frac{s_{an}(A[i+1],B[1])}{2}  + \texttt{fs\_open\_cost}, &  \texttt{if ~i (mod~3) = 2}
  \end{array}
   \right.$
   
  \noindent   
  $\texttt{~~for j = 0 to m do}$\\
      $\texttt{~~~~~~}D(0,j) = floor(\frac{j}{3}) * \texttt{gap\_cost}$\\
      $\texttt{~~~~~~}D_F(0,j) = D(0,j) + \left\{
   \begin{array}{ll}
    \frac{s_{an}(A[1],B[j+1])}{2} + \frac{s_{an}(A[2],B[j+2])}{2} + \texttt{fs\_open\_cost}, &  \texttt{if ~j (mod~3) = 1}\\
    \frac{s_{an}(A[1],B[j+1])}{2}  + \texttt{fs\_open\_cost}, &  \texttt{if ~j (mod~3) = 2}\\
  \end{array}
\right.$
  $\texttt{~~for i = 0 to n do}$\\
      $\texttt{~~~~~~for j = 0 to m do}$\\
          $\texttt{~~~~~~~~~~compute D(i,j) using Lemma \ref{D}}$\\
          $\texttt{~~~~~~~~~~compute $D_F$(i,j) using Lemma \ref{DF}, ~if ~i (mod~3) = 0 ~or ~j (mod~3) = 0}$
 
\end{theorem}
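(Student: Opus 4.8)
The plan is to prove the statement in two parts: correctness of the algorithm, and the time and space bounds, with the bulk of the work going into correctness. For correctness I would argue by induction over the cells of the two tables that, for all $0\le i\le n$ and $0\le j\le m$, the value $D(i,j)$ equals the maximum score of an alignment between the prefixes $A[1\,..\,i]$ and $B[1\,..\,j]$ under the scoring scheme of the score definition, and that $D_F(i,j)$ equals the extended-prefix score prescribed in Definition~\ref{table}. The inductive step is then exactly the content of Lemmas~\ref{D} and~\ref{DF}, so there is nothing new to prove cell-by-cell; the theorem is essentially the assembly of these two lemmas into a terminating, correctly ordered procedure, plus a complexity count.

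First I would discharge the base cases. For $D(i,0)$ and $D(0,j)$ the only admissible alignment aligns every nucleotide of the nonempty prefix against a gap, so each of the $\lfloor i/3\rfloor$ (resp.\ $\lfloor j/3\rfloor$) grouped codons is an InDel codon contributing $\texttt{gap\_cost}$ and nothing else is scored, which matches the initialization $D(i,0)=\lfloor i/3\rfloor\cdot\texttt{gap\_cost}$. For $D_F(i,0)$ and $D_F(0,j)$ I would verify directly against the definition of $D_F$: extending the prefix by $\alpha=(3-i)\bmod 3$ columns to the next codon boundary makes the trailing codon a frameshift codon, contributing one $\texttt{fs\_open\_cost}$ together with the half $s_{an}$ scores of the newly paired nucleotides, which is exactly the correction term added in the algorithm.

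Next, and this is where the actual argument lies, I would check that the recurrences of Lemmas~\ref{D} and~\ref{DF} can legitimately be evaluated in the order the algorithm prescribes. Concretely I would verify that every $D$ entry on the right-hand side of Lemma~\ref{D} has index $(i',j')$ with $i'\le i$, $j'\le j$ and $(i',j')\ne(i,j)$, that every $D_F$ entry used has $i'<i$ or $j'<j$ and additionally satisfies $i'\bmod 3=0$ or $j'\bmod 3=0$ (so it lies in the region where $D_F$ is actually filled), and symmetrically for Lemma~\ref{DF}. Under the row-major double loop this guarantees that all referenced cells were computed in an earlier iteration; the only same-cell dependency is that $D_F(i,j)$ may use $D(i,j)$, which is harmless because the loop body computes $D(i,j)$ before $D_F(i,j)$ and $D(i,j)$ never refers back to $D_F(i,j)$. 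Induction on $i+j$ then yields the claimed semantics for all cells, and since $n$ and $m$ are multiples of $3$, the cell $D(n,m)$ holds the maximum score of an alignment between the full sequences $A$ and $B$.

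Finally, for the bounds I would observe that each of the $(n+1)(m+1)$ cells of $D$, and each defined cell of $D_F$, is a maximum of at most eighteen closed-form expressions, each evaluable in constant time from previously stored cells; hence the tables are filled in $O(n\cdot m)$ time and occupy $O(n\cdot m)$ space, and an optimal alignment attaining $D(n,m)$ is recovered by standard back-tracking that follows, at each step, the maximizing configuration of Lemma~\ref{D} or~\ref{DF}, in $O(n+m)$ additional time. The hard part will not be any single computation but the bookkeeping of the two middle paragraphs: pinning down the precise prefix-scoring semantics of $D_F$ so that the base cases and the half-score subtractions line up, and confirming that the mutual references between $D$ and $D_F$ create neither a cyclic nor a forward dependency in the chosen evaluation order.
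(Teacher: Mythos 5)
Your proposal is correct and follows essentially the same route as the paper's proof: correctness is reduced to the initialization, to Lemmas~\ref{D} and~\ref{DF} as the inductive step, to checking that all referenced cells (including the $D_F$ cells, which must lie in the region $i \bmod 3 = 0$ or $j \bmod 3 = 0$) are available under the row-major evaluation order with $D(i,j)$ computed before $D_F(i,j)$, and to a standard backtracking, while the complexity follows from the constant per-cell work over $O(n \cdot m)$ cells. Your treatment is in fact somewhat more explicit than the paper's (notably the base-case verification and the same-cell dependency of $D_F(i,j)$ on $D(i,j)$), but it is the same argument.
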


The proof of Theorem \ref{thm} is given in Appendix.

 \vspace{-0.2cm}
 \section{Implementation}

 We implemented the algorithm presented in this paper and the pairwise
 alignment algorithm accounting for frameshift opening penalties described
 in \cite{ranwez2011}.

 We applied both algorithms to the alignment of the examples of coding sequences
 Seq1, Seq2, and Seq3 described in Section \ref{intro}, with the following
 parameters: $\texttt{gap\_cost} = -1$, $\texttt{fs\_open\_cost} = -2$,
 $\texttt{fs\_extension\_cost} = -1$,  $s_{aa}$ corresponding to the amino acid
 substitution matrix BLOSUM62, and $s_{an}$ returning a score of $+1$
 (resp. $-1$) for a match (resp. mismatch) between two nucleotides.
 As predicted, the application of the algorithm from \cite{ranwez2011}
 to  Seq1, Seq2 and Seq3  yields the same score of $72.0$ for both the alignment
 between Seq1 and Seq2, and  the alignment between Seq1 and Seq3. The present
 algorithm yields a score of $68.5$ for Seq1 and Seq2,
 and a lower score of $58.0$ for Seq1 and Seq3.

 Using the same parameters, both algorithms were also applied to pairs of
 human coding sequences from paralogous genes that share a common coding
 subsequence translated in different frames (see \cite{okamura2006frequent}
 for a list
 of $470$ pairs of human coding sequences presenting a frameshift event).
 In Appendix, the alignments obtained for the coding
 sequences of the protein \emph{NM\_001083537} from Gene FAM86B1 and the protein
 \emph{NM\_018172} from Gene FAM86C1 are shown. These alignments show that both coding sequences
 share a common prefix subsequence translated in the same frame, and a
 common subsequence at the end of  \emph{NM\_018172}  translated in different frames,
 yielding a frameshift event. The algorithm of \cite{ranwez2011} yields
 a high score of $718.0$ for the alignment, while the present
 algorithm return a score of $530$ accounting for a frameshift extension
 length of 81 nucleotides.

 \vspace{-0.2cm}
\section{Conclusion}
\label{conclusion}

We introduce a new algorithm for the pairwise alignment protein-coding
sequences, accounting for translation frameshift extensions and their
consequences on the modification of the protein sequences. The dynamic
programming algorithm has the same asymptotic space and time complexity
as the classical Needleman-Wunsch algorithm.
The perspectives of this work include the evaluation of the impact of
the new method on the comparison of pairs of coding sequences listed in
biological databases. We also plan to study the extension of the method
in the context of multiple protein-coding sequence alignment.

\vspace{-0.2cm}
\bibliographystyle{plain}

\bibliography{biblio}

\newpage
\section*{Appendix}



\subsection*{Complete proof of Lemma \ref{D}}

\begin{proof}[Complete proof of Lemma \ref{D}]
  An illustration of  the different configurations of alignment
  considered for the cases 1 and 2 of Lemma \ref{D} in this proof
  is given in Figure \ref{fig:D1}.
  For each of the cases 1, 2, 3 and 4 of the Lemma,
  we first consider three cases depending on the configurations of the
  alignment of $A[i]$ and $B[j]$: (a) $A[i]$ and $B[j]$
  are aligned together, (b) $A[i]$ is aligned with a gap, (c) $B[j]$ is
  aligned with a gap.
    \begin{enumerate}
   \item {\bf If $i (mod~3) = 0$ and $j (mod~3) = 0$}, then $A[i]$ and $B[j]$ are
     the last nucleotides of two codons $A[i-2~..~i]$ and $B[j-2~..~j]$. There are
     three cases depending on the alignment of $A[i]$ and $B[j]$.
      \begin{enumerate}
      \item {\bf If $A[i]$ and $B[j]$ are aligned together}, there are four cases
        depending on whether $A[i-2~..~i]$ and $B[j-2~..~j]$ are grouped in
        the alignment or not.
        \begin{enumerate}
        \item {\bf If both $A[i-2~..~i]$ and $B[j-2~..~j]$ are grouped}, then
          $A[i-2~..~i]$ and $B[j-2~..~j]$  have to be aligned together, and the
          score of the alignment is:\\
          1.  $s_{aa}(A[i-2~..~i],B[j-2~..~j]) + D(i-3,j-3)$
        \item {\bf If  $A[i-2~..~i]$ is grouped while  $B[j-2~..~j]$ is not grouped},
          then both $A[i-2~..~i]$ and $B[j-2~..~j]$ are FS codons ($A[i-2~..~i]$ is
          a FS$^-$ codon while $B[j-2~..~j]$ is a FS$^+$ codon). We add
          $2 * \texttt{fs\_open\_cost}$ to the score of the alignment, and
          the alignment of the nucleotides of the two FS codons can be scored
          independently using the scoring function $s_{an}$.
         There are two cases depending on the number of  nucleotides from
          $B[j-2~..~j]$ that are aligned with $A[i-2~..~i]$, two or one:
            \begin{enumerate}
            \item {\bf If  $A[i-2~..~i]$ is aligned with two nucleotides}, then these
              nucleotides are $B[j-1]$ and $B[j]$.
              There are two cases depending on the alignment of the nucleotide
              $B[j-1]$ with $A[i-1]$ or $A[i-2]$:\\
              2. $s_{an}(A[i],B[j]) + s_{an}(A[i-1],B[j-1]) + D(i-3,j-2) + 2 * \texttt{fs\_open\_cost}$\\
              3. $s_{an}(A[i],B[j]) + s_{an}(A[i-2],B[j-1]) + D(i-3,j-2) + 2 * \texttt{fs\_open\_cost}$
            \item {\bf If  $A[i-2~..~i]$ is aligned with one nucleotide},  then
              this single nucleotide is $B[j]$, and the score of the alignment is:\\
             4. $s_{an}(A[i],B[j]) + D(i-3,j-1) + 2 * \texttt{fs\_open\_cost}$
            \end{enumerate}
          \item  {\bf If  $A[i-2~..~i]$ is not grouped while $B[j-2~..~j]$ is grouped},
            there are three cases that are symmetric to the three cases from
            (a)ii.:\\
            5. $s_{an}(A[i],B[j]) + s_{an}(A[i-1],B[j-1]) + D(i-2,j-3) + 2 * \texttt{fs\_open\_cost}$\\
            6. $s_{an}(A[i],B[j]) + s_{an}(A[i-1],B[j-2]) + D(i-2,j-3) + 2 * \texttt{fs\_open\_cost}$\\
            7. $s_{an}(A[i],B[j]) + D(i-1,j-3) + 2 * \texttt{fs\_open\_cost}$
          \item {\bf If both $A[i-2~..~i]$ and $B[j-2~..~j]$ are not grouped}, then again
        both $A[i-2~..~i]$ and $B[j-2~..~j]$ are FS codons (both are FS$^+$ codons):\\
        8. $s_{an}(A[i],B[j]) + D(i-1,j-1) + 2 * \texttt{fs\_open\_cost}$\\
        \end{enumerate}
        
      \item {\bf If $A[i]$ is aligned with a gap}, then the codon  $A[i-2~..~i]$
        is a FS codon (FS$^-$ or FS$^+$). We must add $\texttt{fs\_open\_cost}$
        to the score of
        the alignment. There are two cases depending on whether
        $A[i-2~..~i]$ is grouped in the alignment or not.
        \begin{enumerate}
        \item {\bf If $A[i-2~..~i]$ is grouped}, then there are three cases depending
          on the number of  nucleotides from $B[j-2~..~j]$ that are aligned
          with $A[i-2~..~i]$, two, one, or zero.
            \begin{enumerate}
            \item {\bf If  $A[i-2~..~i]$ is aligned with two nucleotides}, then
              these
              nucleotides are $B[j-1]$ and $B[j]$. The score of the alignment is:\\
              9. $\frac{s_{an}(A[i-1],B[j])}{2} + \frac{s_{an}(A[i-2],B[j-1])}{2} + D_F(i-3,j-2) + \texttt{fs\_open\_cost}$
            \item {\bf If  $A[i-2~..~i]$ is aligned with one nucleotide}, then this
              single nucleotide is $B[j]$. There two cases depending on
              the alignment of the nucleotide $B[j]$ with $A[i-1]$ or $A[i-2]$:\\
              10. $s_{an}(A[i-1],B[j]) + D(i-3,j-1) + 2 * \texttt{fs\_open\_cost}$\\
              11. $\frac{s_{an}(A[i-2],B[j])}{2} + D_F(i-3,j-1) + \texttt{fs\_open\_cost}$
            \item {\bf If $A[i-2~..~i]$ is aligned with zero nucleotide}, then
              the codon
        $A[i-2~..~i]$ is entirely deleted. The score of the alignment is: \\
              12. $\texttt{gap\_cost} + D(i-3,j)$ 
            \end{enumerate}
          \item  {\bf If $A[i-2~..~i]$ is not grouped}, then the codon $A[i-2~..~i]$
            is a FS$^+$ codon, and the score of the alignment is:\\
        13. $D(i-1,j) + \texttt{fs\_open\_cost}$\\  
        \end{enumerate}

      \item {\bf If $B[i]$ is aligned with a gap}, there are fives cases that
        are symmetric
        to the five cases from (b):\\
        14. $\frac{s_{an}(A[i],B[j-1])}{2} + \frac{s_{an}(A[i-1],B[j-2])}{2} + D_F(i-2,j-3) + \texttt{fs\_open\_cost}$\\
        15. $s_{an}(A[i],B[j-1]) + D(i-1,j-3) + 2 * \texttt{fs\_open\_cost}$\\  
        16. $\frac{s_{an}(A[i],B[j-2])}{2} + D_F(i-1,j-3) + \texttt{fs\_open\_cost}$\\
         17. $\texttt{gap\_cost} + D(i,j-3)$\\
         18. $D(i,j-1) + \texttt{fs\_open\_cost}$\\  
      \end{enumerate}

    \item {\bf If $i (mod~3) = 0$ and $j (mod~3) \neq 0$}, then $A[i]$ is the
      last nucleotide of a codon $A[i-2~..~i]$ and $B[j]$ is not the last
      nucleotide of a codon. There are three cases depending on the alignment
      of $A[i]$ and $B[j]$.
      \begin{enumerate}
      \item {\bf If  $A[i]$ and $B[j]$ are aligned together},  there are
        two cases depending on whether $A[i-2~..~i]$  is grouped in the
        alignment or not.
        \begin{enumerate}
        \item {\bf If $A[i-2~..~i]$ is grouped}, there are three cases depending
          on the number of nucleotides from $B$ that are aligned with
          $A[i-2~..~i]$, three, two, or one:
          \begin{enumerate}
          \item  {\bf If  $A[i-2~..~i]$ is aligned with three nucleotides}, then
            these nucleotides are $B[j]$, $B[j-1]$, and $B[j-2]$. We are in
            the case of an unmatching (U) codon. The score of the alignment is
            then:\\      
          1. $\frac{s_{aa}(A[i-2~..~i],B[j-2~..~j])}{2} + D_F(i-3,j-3) + \texttt{fs\_extension\_cost}$ + $\frac{s_{an}(A[i],B[j])}{2} ~(+ \frac{s_{an}(A[i-1],B[j-1])}{2} ~if ~j-1 (mod~3) \neq 0)$
          \item {\bf If  $A[i-2~..~i]$ is aligned with two nucleotides}, then
            these nucleotides are $B[j]$ and $B[j-1]$. $A[i-2~..~i]$ is a FS$^-$
            codon. There are two cases
            depending of the alignment of $B[j-1]$ with $A[i-1]$ or $A[i-2]$.
            In both cases, if $j-1 (mod~3) = 0$, then $j-1$ is the last
            nucleotide of a codon. We should then make adjustments in order to
            account for the type of this codon (FS$^+$, or unknown type for now):\\      
          2. $s_{an}(A[i],B[j]) + s_{an}(A[i-1],B[j-1]) + D(i-3,j-2) + \texttt{fs\_open\_cost} ~(+ \texttt{fs\_open\_cost} ~if ~j-1 (mod~3) = 0)$\\
          3. $s_{an}(A[i],B[j]) + s_{an}(A[i-2],B[j-1]) + D_F(i-3,j-2) + \texttt{fs\_open\_cost} ~(- \frac{s_{an}(A[i-2],B[j-1])}{2} ~ if ~ j-1 (mod~3) = 0)$
        \item {\bf If  $A[i-2~..~i]$ is aligned with one nucleotide}, then
          $A[i-2~..~i]$ is a FS$^-$ codon. The score of the alignment is:\\      
          4. $s_{an}(A[i],B[j]) + D(i-3,j-1) + \texttt{fs\_open\_cost}$
          \end{enumerate}
        \item {\bf If $A[i-2~..~i]$ is not grouped}, then $A[i-2~..~i]$ is a
          FS$^+$ codon:\\
         5. $s_{an}(A[i],B[j]) + D(i-1,j-1) + \texttt{fs\_open\_cost}$
        \end{enumerate}
        
      \item {\bf If  $A[i]$ is aligned with a gap},  there are
        two cases depending on whether $A[i-2~..~i]$  is grouped in the
        alignment or not.
        \begin{enumerate}
        \item {\bf If $A[i-2~..~i]$ is grouped}, there are three cases depending
          on the number of nucleotides from $B$ that are aligned with
          $A[i-2~..~i]$, two, one, or zero.
          \begin{enumerate}
          \item {\bf If  $A[i-2~..~i]$ is aligned with two nucleotides}, then
            these nucleotides are $B[j]$ and $B[j-1]$. $A[i-2~..~i]$ is a
            FS$^-$ codon. If $j-1 (mod~3) = 0$, then $j-1$
            is the last nucleotide of a codon. We should make adjustments
            in order to account for the fact no type has yet been decided
            for this codon.\\      
         6. $ s_{an}(A[i-1],B[j]) + s_{an}(A[i-2],B[j-1]) + D_F(i-3,j-2) + \texttt{fs\_open\_cost} ~(- \frac{s_{an}(A[i-2],B[j-1])}{2} ~if ~j-1 (mod~3) = 0)$
          \item {\bf If  $A[i-2~..~i]$ is aligned with one nucleotide}, then
            this single nucleotide is $B[j]$.
            $A[i-2~..~i]$ is a FS$^-$ codon. There are two cases depending
            on the alignment of $B[j]$ with $A[i-1]$ or $A[i-2]$:\\      
         7. $s_{an}(A[i-1],B[j]) + D(i-3,j-1) + \texttt{fs\_open\_cost}$\\
         8. $s_{an}(A[i-2],B[j]) + D(i-3,j-1) + \texttt{fs\_open\_cost}$
       \item {\bf If $A[i-2~..~i]$ is aligned with zero nucleotide}, the codon
        $A[i-2~..~i]$ is entirely deleted:\\      
         9. $\texttt{gap\_cost} + D(i-3,j)$
          \end{enumerate}
        \item {\bf If $A[i-2~..~i]$ is not grouped}\\
         10. $D(i-1,j) + \texttt{fs\_open\_cost}$
       \end{enumerate}
      \item {\bf If  $B[j]$ is aligned with a gap}, then the score of the
        alignment is:\\
         11. $D(i,j-1)$\\
    \end{enumerate}

      \item {\bf If  $i (mod~3) \neq 0$ and $j (mod~3) = 0$}, the proof is
        symmetric to the previous proof for 2.\\
        
      \item {\bf If  $i (mod~3) \neq 0$ and $j (mod~3) \neq 0$}, there are
        three cases depending on the alignment of $A[i]$ and $B[j]$.
      \begin{enumerate}
      \item {\bf If  $A[i]$ and $B[j]$ are aligned together}, the score of the
        alignment is:\\
      1. $s_{an}(A[i],B[j]) + D(i-1,j-1)$
      \item {\bf If  $A[i]$ is aligned with a gap}, the score of the
        alignment is:\\
      2. $D(i-1,j)$
      \item {\bf If  $B[j]$ is aligned with a gap}, the score of the
        alignment is:\\
      3. $D(i,j-1)$    
      \end{enumerate}

  \end{enumerate}
\end{proof}

\subsection*{Proof of Lemma \ref{DF}}

\begin{proof}[Proof of Lemma \ref{DF}] The proof follows from Lemma \ref{D}.
  \begin{enumerate}
  \item {\bf If $i (mod~3) = 0$ and $j (mod~3) = 0$}, this case is trivial.
  \item {\bf If $i (mod~3) = 2$ and $j (mod~3) = 0$}, then $i+1 (mod~3) = 0$
    and $j+1 (mod~3) = 1 \neq 0$. The five cases follow from the
    application of Lemma \ref{D}, Case 2 for computing $D(i+1,j+1)$,
    and by keeping
    only the cases where $A[i+1]$ and $B[i+1]$ are aligned together (cases 1,
    2, 3, 4, 5 among the 11 cases). However, in each of the cases,
    we must subtract half of the score of aligning $B[i+1]$ with $A[i+1]$
    ($\frac{s_{an}(A[i+1],B[j+1])}{2}$), because this score will be added
    subsequently.
    
  \item {\bf If $i (mod~3) = 0$ and $j (mod~3) = 2$}, the proof is symmetric to the previous case.

  \item {\bf If $i (mod~3) = 1$ and $j (mod~3) = 0$}, then $i+2 (mod~3) = 0$
    and $j+2 (mod~3) = 2 \neq 0$. Here again, the three cases follow
    from the application of Lemma \ref{D}, Case 2 for computing
    $D(i+2,j+2)$, and
    by keeping only the cases where $A[i+1]$, $B[i+1]$, and $A[i+1=2]$,
    $B[i+2]$ can be aligned together (cases 1, 2, 5 among the 11 cases).
    However, in each of the cases, we must subtract half of the
    scores of aligning $B[i+2]$ with $A[i+2]$, and aligning $B[i+1]$ with
    $A[i+1]$ ($\frac{s_{an}(A[i+2],B[j+2])}{2}$,
    $\frac{s_{an}(A[i+1],B[j+1])}{2}$), because theses scores will be added
    subsequently.
    
  \item {\bf If $i (mod~3) = 0$ and $j (mod~3) = 1$}, the proof is symmetric to the previous case.
  \end{enumerate}
  \end{proof}

\subsection*{Proof of Theorem \ref{thm}}
\begin{proof}[Proof of Theorem \ref{thm}]
  The proof relies on two points: (1) The algorithm computes the maximum
  score of an alignment between $A$ and $B$, and (2) the algorithm runs
  with an $O(n.m)$ time and space complexity.\\

  \noindent
  (1) The validity of the algorithm, i.e. the facts that it fills the cells of
  the tables $D$, $D_F$ according to Definition \ref{table}, follows from
  five points.
  \begin{itemize}
  \item The initialization of the tables is a direct consequence of
  Definition \ref{table}.
  \item Lemmas \ref{D} and \ref{DF}.
  \item The couples $(i,j)$ of prefixes of $A$ and $B$ that need to be
    considered
    in the algorithm are all the possible couples for $D(i,j)$, and only
    the couples such
  that $i (mod~3) = 0$ or $j (mod~3) = 0$ for $D_F(i,j)$ (see all the cases
  in which the table $D_F$ is used in Lemmas \ref{D} (7 cases) and \ref{DF}
  (3 cases)).
  \item The couples $(i,j)$ of prefixes of $A$ and $B$ are considered
  in increasing
  order of length, and $D[i,j]$ is computed before $D_F[i,j]$ in the cases
  where $i (mod~3) = 0$ or $j (mod~3) = 0$.
  \item A backtracking of the algorithm allows to find a maximum
  score alignment between $A$ and $B$.
  \end{itemize}

  \noindent
  (2) The time and space complexity of the algorithm is a direct consequence
  of the number of cells of the tables $D$ and $D_F$,
  $2 \times (n+1 \times m+1)$. Each cell is filled in constant time.
  \end{proof}

\subsection*{Alignment of coding sequences NM\_001083537  and NM\_018172 using
a previously published method \cite{ranwez2011} and the present method}

\begin{verbatim}
Translations of NM_001083537 and NM_018172 into protein sequences

>NM_001083537
MAPEENAGTELLLQGFERRFLAVRTLRSFPWQSLEAKLRDSSDSELLRDILQKTVRHPVC
VKHPPSVKYAWCFLSELIKKSSGGSVTLSKSTAIISHGTTGLVTWDAALYLAEWAIENPA
AFINRTVLELGSGAGLTGLAICKMCRPRAYIFSDPHSRVLEQLRGNVLLNGLSLEADITG
NLDSPRVTVAQLDWDVAMVHQLSAFQPDVVIAADVLYCPEAIVSLVGVLQRLAACREHKR
APEVYVAFTVRNPETCQLFTTELGRDGIRWEAEAHHDQKLFPYGEHLEMAMLNLTL*

>NM_018172
MAPEENAGSELLLQSFKRRFLAARALRSFRWQSLEAKLRDSSDSELLRDILQKHEAVHTE
PLDELYEVLVETLMAKESTQGHRSYLLTCCIAQKPSCRWSGSCGGWLPAGSTSGLLNSTW
PLPSATQRCASCSPPSYAGLGSDGKRKLIMTRNCFPTESTWRWQS*
\end{verbatim}

\begin{verbatim}

Score obtained with previously published method : 718.0

A!!TGGCGCCCGAGGAGAACGCGGGGACCGAACTCTTGCTGCAGGGTTTTGAGCGCCGCT
A!!TGGCGCCCGAGGAGAACGCGGGGAGCGAACTCTTGCTGCAGAGTTTCAAGCGCCGCT

TCCTGG---CGGTGCGCACACTGCGCTCCTTC!CCC---TGGCAGAGCTTAGAGGCAAAG
TCCTGGCAGCGC---GCGCCCTGCGCTCCTT!!CCGC!!TGGCAGAGCTTAGAAGCAAAG

TTAAGAGACT!!CATCAGATTCTGAGCTGCTGCGGGATATTTTGCAGAAGACTGTGAGGC
TTAAGAGACT!!CATCAGATTCTGAGCTGCTGCGGGATATTTTGC---AGA---------
\end{verbatim}
\newpage
\begin{verbatim}
ATCCTGTGTGTGTGAAGCACCCGCCG!TCAGTCAAGTATGCCTGGTGCT!!TTCTCTCAG
---------------AGCACGAGGC!!------------------TGT------------

AACTCATCAAAAAGTCCTCAGGAGGCTCAGTCACACTCTCCAAGAGCACAGCCATCATCT
---------------CCA---------------CAC------AGAG!!---CCTT!!---

CCCACGGTACCACAGGCCTGGTCACATGGGATGCCGCCCTCTA!!CCTTGCAGAATGGGC
------------------TGG---------ATG------AGC------TGT---ACG---

CATCGAGAACCCGGCAGCCTTCATTAACAGGACTGTCCTAGAGCTTGGCAGTGGTGCCGG
---AGG---------------------------TGC---------TGG---TGG---AGA

CCTCACAGGCCTTGCCATCTGCAAGATGTGCCGCCCCCGGGCATACATCTTCAGCGACCC
C!!------CCT------------GAT------------GGC------------------

TCACAGCCGGGTCCTCGAGCAGCTCCGAGGGAATGTCCTTCTCAATGGCCTCTCATTAGA
---CAA---GGA------------------------------------------GTC---

GGCAGACATCACTGGCAACTTAGACAGCCCCAGGGTGACAGTGGCCCAGCTGGACTGGGA
---------CAC---------------CCA------------GGGCCA------------

CGTAGCAATGGTCCAT!!CAGCTCTCTGCCTTCCAGCCAGATGTTGTCATTGCAGCAGAC
------------CCG---GAGCTA------TTT---------------------GCTGAC

G!!TGCTGTATTGCCCAGAAGCCATCGTGTCGCTGGTCGGGGTCCTGCAGAGGCTGGCTG
G!!TGCTGTATTGCCCAGAAGCCATCGTGTCGCTGGTCGGGGTCCTGCGGAGGCTGGCTG

CCTGCCGGGAGCACAAGCGGGCTCCTGAGGTCTACGTGGCCTTTACCGTCCG!!CAACCC
CCTGCCGGGAGCACCAGCGGGCTCCTCAATTCTACATGGCCCTTACCGTCTG!!CAACCC

AGAGACGTGCCAGCTGTTCACCACCGAGCTAG!GCCGGGA!!TGGGATC!!AGATGGGAA
AGAGATGTGCCAGCTGTTCACCACCGAGCTAT!GCTGGAC!!TGGGATC!!AGATGGGAA

GCGGAAGCTCATCATGACCAGAAACTGTTTCCCTATG!!GAGAGCACTTGGAGATGGCAA
GCGGAAGCTCATCATGACCAGAAACTGTTTCCCTACA!!GAGAGCACTTGGAGATGGCAA

TGCTGAACCTCACACTGTAG!
---------AGC------TGA
\end{verbatim}

\begin{verbatim}
Score obtained with present method: 530.0

ATGGCGCCCGAGGAGAACGCGGGGACCGAACTCTTGCTGCAGGGTTTTGAGCGCCGCTTC
ATGGCGCCCGAGGAGAACGCGGGGAGCGAACTCTTGCTGCAGAGTTTCAAGCGCCGCTTC

CTGGCGGTGCGCACACTGCGCTCCTTCCCCTGGCAGAGCTTAGAGGCAAAGTTAAGAGAC
CTGGCAGCGCGCGCCCTGCGCTCCTTCCGCTGGCAGAGCTTAGAAGCAAAGTTAAGAGAC

TCATCAGATTCTGAGCTGCTGCGGGATATTTTGCAGAAGACTGTGAGGCATCCTGTGTGT
TCATCAGATTCTGAGCTGCTGCGGGATATTTTGCAG------------------------

GTGAAGCACCCGCCGTCAGTCAAGTATGCCTGGTGCTTTCTCTCAGAACTCATCAAAAAG
---AAGCAC------------------------------------GAG------------

TCCTCAGGAGGCTCAGTCACACTCTCCAAGAGCACAGCCATCATCTCCCACGGTACCACA
------------------------------------GCT---GTC---CAC---ACA-GA

GGCCTGGTCACATGGGATGCCGCCCTCTACCTTGCAGAATGGGCCATCGAGAACCCGGCA
G-CCT-TTG------GAT---GAGCTGTAC------GAG------GTG------------

GCCTTCATTAACAGGACTGTCCTAGAGCTTGGCAGTGGTGCCGGCCTCACAGGCCTTGCC
------CTG---------GTG---GAG---------------------ACC---CTG---

ATCTGCAAGATGTGCCGCCCCCGGGCATACATCTTCAGCGACCCTCACAGCCGGGTCCTC
---------ATG------------GCC---------------------------------

GAGCAGCTCCGAGGGAATGTCCTTCTCAATGGCCTCTCATTAGAGGCAGACATCACTGGC
---------AAG------------------------------GAG---------------

AACTTAGACAGCCCCAGGGTGACAGTGGCCCAGCTGGACTGGGACGTAGCAATGGTCCAT
---------TCC---------ACC------CAG--------------------GGC-CAC

CAGCTCTCTGCCTTCCAGCCAGATGTTGTCATTGCAGCAGACGTGCTGTATTGCCCAGAA
CGG---AGC---TAT----------------TTGCT---GACGTGCTGTATTGCCCAGAA

GCCATCGTGTCGCTGGTCGGGGTCCTGCAGAGGCTGGCTGCCTGCCGGGAGCACAAGCGG
GCCATCGTGTCGCTGGTCGGGGTCCTGCGGAGGCTGGCTGCCTGCCGGGAGCACCAGCGG

GCTCCTGAGGTCTACGTGGCCTTTACCGTCCGCAACCCAGAGACGTGC--CAGCTGTTCA
GCTCCTCAATTCTACATGGCCCTTACCGTCTGCAACCCAGAGA--TGTGCCAGCTGTTCA

CCACCGAGCTA----GGCCGGGATGGGATCAGATGGGAAGCGGAAGCTCATCATGACCAG
CCACCGAGCTATGCTGGA----CTGGGATCAGATGGGAAGCGGAAGCTCATCATGACCAG

AAACTGTTTCCCTATGGAGAGCACTTGGAGATGGCAATGCTGAACCTCACACTGTAG
AAACTGTTTCCCTACAGAGAGCACTTGGAGATGGCAA-----------AGC---TGA
\end{verbatim}

\end{document}